\newcommand\norm[1]{\left\lVert#1\right\rVert}
\newcommand\dnorm[1]{\left\lVert#1\right\rVert_{\diamond}}
\newcommand\pnorm[1]{\left\lVert#1\right\rVert_{pauli}}
\newtheorem{theorem}{Theorem}
\newtheorem*{theorem*}{Theorem}
\newtheorem{lemma}[theorem]{Lemma}
\newtheorem{definition}{Definition}
\newtheorem{remark}{Remark}
\begin{document}

\title{A quantum algorithm to simulate Lindblad master equations}% Force line breaks with \\
\author{Evan Borras}
\affiliation{Center for Quantum Information and Control, University of New
Mexico, Albuquerque, NM 87131, USA}
\affiliation{Department of Physics and Astronomy, University of New Mexico,
Albuquerque, NM 87131, USA}
\author{Milad Marvian}
\affiliation{Center for Quantum Information and Control, University of New
Mexico, Albuquerque, NM 87131, USA}
\affiliation{Department of Physics and Astronomy, University of New Mexico,
Albuquerque, NM 87131, USA}
\affiliation{Department of Electrical \& Computer Engineering, University of New Mexico, Albuquerque, NM 87131, USA}

\begin{abstract}
We present a quantum algorithm for simulating a family of Markovian master
equations that can be realized through a probabilistic application of unitary
channels and state preparation. Our approach employs a second-order product
formula for the Lindblad master equation, achieved by decomposing the dynamics
into dissipative and Hamiltonian components and replacing the dissipative
segments with randomly compiled, easily implementable elements.  The sampling
approach eliminates the need for ancillary qubits to simulate the dissipation
process and reduces the gate complexity in terms of the number of jump
operators. We provide a rigorous performance analysis of the algorithm. We also
extend the algorithm to time-dependent Lindblad equations, generalize the noise
model when there is access to limited ancillary systems, and explore
applications beyond the Markovian noise model. A new error bound, in terms of
the diamond norm, for second-order product formulas for
time-dependent Liouvillians is provided that might be of independent interest.

\end{abstract}

\maketitle

One of the first suggested applications of quantum computing was to simulate
quantum mechanical systems \cite{feynman1982simulating}. Since Feynman's
initial proposal, the field of quantum simulation has grown rapidly, resulting
in numerous algorithms that offer asymptotic speed-ups over the best possible
classical algorithms. Initially, much  focus has been on simulating closed quantum systems
\cite{berry2007efficient, berry2013gateefficient, berry2014exponential,
childs2019fasterquantum, Low2019hamiltonian, berry2015simulating,
campbell2019random,nakaji2024high}, only 
recently has there been a significant interest in developing quantum algorithms for simulating open quantum systems
\cite{childs2017efficient, dibartolomeo2023efficient, kliesch2011dissipative, li2023simulating,
li2023succinct, schlimgen2021quantum, schlimgen2022quantumsimulation,
schlimgen2022quantumstate, suri2023twounitary, hu2020a,
cleve2019efficient,guimares2023noise,guimares2024optimized,joo2023commutation}.
Notable examples are algorithms to simulate a single quantum channel applied to an input
quantum state \cite{hu2020a, dibartolomeo2023efficient, schlimgen2021quantum}, algorithms that simulate continuous-time dynamics describable by master equations~\cite{childs2017efficient, cleve2019efficient, dibartolomeo2023efficient,
kliesch2011dissipative, li2023simulating, li2023succinct,
schlimgen2022quantumsimulation}.  
Special
attention is paid to dynamics generated by time-independent Lindblad
master equations, since
they cover the common scenario of a quantum system weakly coupled to
a Markovian environment. Some examples include 
\cite{childs2017efficient, cleve2019efficient, dibartolomeo2023efficient,
kliesch2011dissipative, li2023simulating, li2023succinct,
schlimgen2022quantumsimulation,guimares2023noise,guimares2024optimized}
with \cite{li2023simulating} achieving the current state-of-the-art
asymptotic scaling. Similar to Hamiltonian simulation, the current 
classical Lindblad simulation algorithms are inefficient, thus efficient
quantum algorithms for Lindblad simulation could prove useful for future
scientific needs. 

Interestingly, many of the current Lindblad simulation algorithms share common
features.  They deterministically implement the full set of jump operators for
the Lindbladian and treat both the dissipative and unitary dynamics on equal
footing. In addition, the resources required by these algorithms explicitly
increase
(polynomially) with respect to the number of Lindblad jump operators along with
the overall norm of the Lindbladian. A missing feature in the existing
algorithms is taking advantage of the fact that in many realistic open system
dynamics, there are classical 
uncertainty regarding the effect of the environment on the system. 
This noise-agnostic approach can lead to situations where the noisy open system
simulation algorithms
can scale worse, sometimes even exponentially worse, than simulating the noiseless
system! For example, naive algorithms to simulate Hamiltonian evolution in the
presence of \textit{global} depolarizing noise would require exponential time
for many of the general-purpose open quantum system algorithms, given the fact
that global depolarization requires exponentially many jump operators. 
However, such an explicit scaling with the number of jump operators can be
 reduced if the classical randomness, present in the description
of the open system dynamics, is directly incorporated into the design of the
 algorithm to simulate dissipation. This strategy has recently been considered for a restricted class of Lindbladians, in which a first-order approximation to a dissipator is used for thermal state preparation on a quantum computer \cite{chifang2023quantum}. It is also used to simulate continuous-time dynamics by leveraging the intrinsic noise of quantum processors \cite{guimares2023noise,guimares2024optimized}, and to design an algorithm that uses a classical
processor alongside the quantum computer to reconstruct the dynamics based on a series of sampled states \cite{liu2024simulation}.

In this Letter, we present a quantum algorithm for simulating both the coherent and dissipative
evolution of a Lindbladian, with no need for any ancilla qubits and reduced
gate complexity in the number of jump operators, although for a restricted
class of Lindbladians. The class of 
Lindbladians we consider are sums
of a Hamiltonian part and a dissipator, where the dissipator is constrained to 
generate channels that are convex combinations of unitary channels and fixed
state preparation. 
Although our noise model is restrictive, it does
encapsulate many physically relevant dissipative processes such as both local and global depolarizing
noise, dephasing and bit-flip noise, and other random Pauli
channels.
The algorithm we introduce uses a second-order product formula to divide the generator into its Hamiltonian part and dissipator.
We then
simulate the dissipative portion using a simple sampling protocol, with the
Hamiltonian dynamics being simulated using any Hamiltonian simulation
algorithm preferred. The only ancillary qubits needed by the algorithm are what is needed to
implement the desired Hamiltonian simulation subroutine. The scaling of the
circuit with respect to simulation time and precision is
$O(T^{1.5}/\sqrt{\epsilon})$, where $T$ is the total simulation time and
$\epsilon$ is the accuracy in the diamond norm. This is an improvement over the
$O(T^{2}/\epsilon)$ scaling of
\cite{chifang2023quantum}.

We also discuss a generalization of the algorithm to simulate a family of
time-dependent Lindbladians. To analyze the performance in the time-dependent case, we prove a new
second-order product formula for time-dependent Louvillians, consisting of an arbitrary time-dependent and an arbitrary time-independent term, which might
be of independent interest.
Additionally, we explore how limited ancillary qubits can broaden the scope of
master equations that our algorithm can simulate, for example simulating local amplitude damping noise using a few ancilla qubits. We also discuss methods to extend the algorithm to non-Markovian master equations.

\textit{Stochastically Simulatable Channels.}---%
Let $\mathcal{E}$ be a CPTP map that can be simulated
stochastically, i.e., it can be implemented by sampling a  set of quantum operations $\{\mathcal{A}_{i}\}$ according to 
probabilities $p_{i}$ such that $
\mathcal{E}(\rho)=\mathbb{E}\left[\mathcal{A}_{i}(\rho)\right]$ holds. Then the measurement statistics of any quantum circuit composed of multiple applications of $\mathcal{E}$  can be reproduced by  replacing any application of
$\mathcal{E}$ with an implementation of $\mathcal{A}_{i}$
sampled independently. In this work we are interested in estimating the expected value of a general observable using our simulation algorithm. With
the goal of designing a low-cost quantum algorithm,  we restrict the operations $\mathcal{A}_{i}$ to be either
unitary or fixed state preparation operations. More
specifically, we consider stochastically simulatable  
channels defined below.
\begin{definition}
    \label{def:stocsimlindblad}
    We call $\mathcal{E}$ a stochastically simulatable channel if it can be written as:
    \begin{align}
    \mathcal{E}(\rho) =
    q\left(\sum_{i}\lambda_{i}U_{i}\rho U_{i}^{\dagger}\right) + 
    (1 - q)\rho_{f}
    \end{align}
    where $U_{i}$ are unitary operators and $\rho_{f}$ is a fixed quantum
    state. Scalars $q$ and
    $\lambda_{i}$ are constrained such that $0 \leq q \leq 1$,  $0 \leq
    \lambda_{i} \leq 1$, and $\sum_{i}\lambda_{i} = 1$.
\end{definition}
The cost of implementation of $\mathcal{E}$  solely depends on the
complexity of implementation of $\{U_{i}\}$  and also the preparation of $\rho_{f}$. Ideally these operations should be efficiently implementable.
Some examples include when $\{U_{i}\}$ are $n$-fold tensor products of
single-qubit unitaries and also when they are generated by
Clifford circuits. 

In this work we mainly focus on simulating open quantum systems described by a
Lindblad master equation, given as $
\frac{d\rho}{dt} = \mathcal{L}(\rho)$ with $\mathcal{L}= \mathcal{H}+\mathcal{D}$, where $\mathcal{H}(\rho) = -i [H, \rho]$ generates Hamiltonian evolution and  the dissipator $\mathcal{D}(\rho) = \sum_{\mu=1}^{m}\left(L_{\mu}\rho L_{\mu}^{\dagger} -
\frac{1}{2}\{L_{\mu}^{\dagger}L_{\mu}, \rho\}\right)$ models the
system-environment interactions. Here $H$ is the system Hamiltonian and $L_{\mu}$ are the Lindblad's jump
operators.

An example of a class of Lindblad superoperators  that satisfy
Definition~\ref{def:stocsimlindblad}
for a small interval of time are dissipators $\mathcal{D}$ with jump
operators $L_{\mu} = \alpha_{\mu}U_{\mu}$ for unitaries $U_{\mu}$. It is straightforward to check that in this case when $\delta t \dnorm{\mathcal{D}}\leq 1$, the operator $e^{\delta t \mathcal{D}}$ can be approximated by a random unitary channel 
with arbitrarily small error [see Supplementary Material (SM)]. When $U_{\mu}$ belongs to a group of
efficiently implementable unitaries, such as $n$-fold tensor products of single-qubit unitaries or Clifford circuits, then the random unitary channel fits our ideal scenario discussed above, since the complexity of
implementation does not change when the unitaries are composed. This class of
Lindblad master equations captures many useful families including random Pauli channels.

\textit{New Algorithm.}---%
We now introduce a quantum algorithm to simulate Lindbladian dynamics. 
In its simplest form, given a
time-independent Lindblad superoperator  $\mathcal{L} =
\mathcal{H} + \mathcal{D}$ where $\mathcal{H}$ is the generator of the unitary evolution and  $\mathcal{D}$ is the generator of a (approximately)  stochastically
simulatable map, our proposed algorithm simulates the evolution of an arbitrary
input state $\rho$ under $\mathcal{L}$ 
up to time $T$ and precision $\epsilon$ by repeatedly implementing Lindbladian simulation gadgets depicted in FIG~\ref{fig:alg1gadget}.
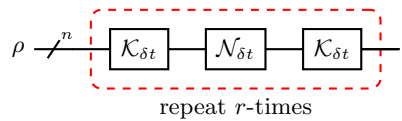
\begin{figure}[h]
    \centering
    \begin{quantikz}
        \lstick{$\rho$} & \qwbundle{n} & \gate{\mathcal{K}_{\delta
        t}}\gategroup[1, steps=3, style={dashed,rounded corners,
        red},background,label style={label
        position=below,anchor=north,yshift=-0.2cm}]{repeat $r$-times} & 
        \gate{\mathcal{N}_{\delta t}}& \gate{\mathcal{K}_{\delta t}} & \qw
    \end{quantikz}
    \caption{Simulation gadget}
    \vspace*{-5pt} 
    \label{fig:alg1gadget}
\end{figure}

Each Lindbladian simulation gadget is built of two different subroutines, one implementing the map $\mathcal{K}_{\delta t}$ which approximates $e^{\frac{\delta t}{2}\mathcal{H}}$ and can be executed using
any choice of Hamiltonian simulation algorithm,
and the other, $\mathcal{N}_{\delta t}$ which approximates $e^{\delta t
\mathcal{D}}$ in expectation using a simple sampling routine. 

As discussed before, we assume  $e^{\delta t \mathcal{D}}$ is approximately stochastically simulatable, i.e., 
   $e^{\delta t \mathcal{D}} = \mathcal{N}_{\delta t} +
    \mathcal{O}(\delta t^{3})$, where  $\mathcal{N}_{\delta t}$ satisfies
Definition~\ref{def:stocsimlindblad}. Since $\mathcal{N}_{\delta t}$ is stochastically simulatable we can write
\begin{equation}
    \mathcal{N}_{\delta t}(\rho) = \sum_{i}p_{i}(\delta
    t)\mathcal{A}_{\delta t}^{(i)}(\rho) =
    \mathbb{E}_{i \sim p_{i}(\delta t)}\left[\mathcal{A}_{\delta t}^{(i)}(\rho)\right],
\end{equation}
where:
\begin{equation} \label{eq:operation}
    \mathcal{A}_{\delta t}^{(i)}(\rho) = 
    \begin{cases} P_{\delta t}(\rho) = \rho_{\delta t} \text{ for } i = 0,\\
        U_{\delta t}^{(i)}\rho
        U_{\delta t}^{(i) \dagger}
    \text{ otherwise },
    \end{cases}
\end{equation}
along with probabilities 
\begin{equation} \label{eq:prob}
    p_{i}(\delta t) = \begin{cases}1 - q(\delta t) \text{ for } i = 0 ,
     \\ q(\delta t)\lambda_{i}(\delta t)
    \text{ otherwise}. \end{cases}
\end{equation}
Therefore our subroutine $\mathcal{N}_{\delta t}$
can approximate $e^{\delta t \mathcal{D}}$ up to
$\mathcal{O}(\delta t^{3})$, by simply sampling $i \sim
p_{i}(\delta t)$ and applying the
operation $\mathcal{A}_{\delta t}^{(i)}$. Thus the gate complexity of
$\mathcal{N}_{\delta t}$ is independent of
the number of jump operators and only
depends on the complexity of implementing unitaries $U_{\delta t}^{(i)}$ and
fixed state preparation oracle $P_{\delta t}$ which 
replaces the state of the system with a fixed state.  
This procedure assures
the measurement statistics over many different compilations of the quantum circuit
implementing $\mathcal{N}_{\delta t}$ will replicate the measurement statistics
of a single circuit implementing $\mathcal{N}_{\delta t}$ exactly. Thus when
compiling each of the Lindbladian
simulation gadgets, we independently sample $i \sim p_{i}(\delta t)$ and implement the operation
$\mathcal{A}_{\delta t}^{(i)}$ in place of $\mathcal{N}_{\delta t}$ in the
gadget. The full algorithm consists of sequentially repeating the blocks of
$\mathcal{K}_{\delta t}$ using any closed-system quantum simulation algorithm
of choice, and sampling $\mathcal{A}_{\delta t}^{(i)}$ for each $\mathcal{N}_{\delta t}$ block. The number of repetitions $r$, required to guarantee a desired accuracy $\epsilon$, would determine the gate complexity of the algorithm.

\textit{Algorithm Analysis.}---%
At a high level, to analyze the performance of the algorithm, we leverage a product formula for Lindblad
superoperators to split up the
dynamics into its Hamiltonian part and a stochastically simulatable 
dissipator. The total simulation time and precision then constrain the
individual subroutines' precision and simulation time. Since we can implement
$\mathcal{N}_{\delta t}$ without error, our constraints only apply to the
Hamiltonian simulation subroutine $\mathcal{K}_{\delta t}$.

First, we introduce some notation. The diamond norm of a linear map
$\mathcal{L}$ over linear operators on a finite-dimensional Hilbert space is
defined as $\dnorm{\mathcal{L}} = \max_{\norm{\rho}_{1} =
1}\norm{\mathcal{I}\otimes\mathcal{L}(\rho)}_{1},$ where $\norm{\cdot}_{1}$ denotes the trace norm  and
$\mathcal{I}$  the identity map. Following  \cite{cleve2019efficient} we  define 
$\pnorm{\mathcal{L}}$ of a Lindblad superoperator $\mathcal{L}$ as 
$\norm{\mathcal{L}}_{pauli} = \sum_{k=0}^{s-1}\beta_{0k} +\sum_{\mu=1}^{m}\left(\sum_{k=0}^{s-1}\beta_{\mu k}\right)^{2}$,
where $H = \sum_{k=0}^{s-1}\beta_{0k}V_{0k}$, and  $L_{\mu} = \sum_{k=0}^{s-1}\beta_{\mu k}V_{\mu k}$ with $\beta_{\mu k} > 0$ and $V_{\mu k}$ an $n$-fold tensor product of Pauli
operators with arbitrary phases. Finally, we denote the total simulation
time and precision by $T$ and $\epsilon$ respectively.

We proceed by finding an upper-bound on 
the error between $r$ applications of our subroutines and the true evolution
generated by $\mathcal{L}$. Assuming     $\left(\frac{\dnorm{\mathcal{H}}}{2} + \dnorm{\mathcal{D}}\right)\delta t
\leq 1$, we can apply a second-order product formula \cite{werner2016positive} (see Theorem~\ref{thm:timedeptrott2} for our generalized version), alongside the sub-multiplicative and additive properties of the
    diamond norm \cite{watrous_2018} to get:
\begin{multline}
    \label{eq:alg1E}
    \norm{e^{T\mathcal{L}} - \left(\mathcal{K}_{\delta t} 
    \circ \mathcal{N}_{\delta t} \circ
    \mathcal{K}_{\delta t}\right)^{r}}_{\diamond} \leq \\ 
    \frac{\dnorm{\left[\mathcal{H}, \mathcal{D}\right]}}{3}
    \left(\frac{\dnorm{\mathcal{H}}}{2} + \dnorm{\mathcal{D}}\right)r\delta t^{3} + 
    2r\left(\epsilon_{H} + \epsilon_{D}\right),
\end{multline}
where $\epsilon_{H} = \dnorm{e^{\delta t \mathcal{H}} - \mathcal{K}_{\delta
t}}$ and $\epsilon_{D} = \dnorm{e^{\delta t\mathcal{D}} -
\mathcal{N}_{\delta t}}$, which represent the errors each of our subroutines
incur.

To formalize our assumption about the generator $\mathcal{D}$, we require
\begin{equation}
    \label{eq:epsilonD}
    \epsilon_{D} = \dnorm{ e^{\delta t \mathcal{D}}-\mathcal{N}_{\delta t} }
    \leq c_{0}(\dnorm{\mathcal{D}}\delta t)^{3},
\end{equation}
where $c_{0}$ is a constant determined by the specifics of the dissipator
$\mathcal{D}$.  As discussed before, dissipators with jump operators of the form $L_{\mu} =
\alpha_{\mu}U_{\mu}$ can be approximated by a stochastically simulatable
channel up to arbitrary small error. Such dissipators are an example of when
requirement~\eqref{eq:epsilonD} is satisfied. Using $\dnorm{\mathcal{L}} \leq 2
\pnorm{\mathcal{L}}$ from \cite{cleve2019efficient} to carry out the
upper bounds in terms of $\pnorm{\cdot}$ gives us a way to easily compare to
the state-of-the-art Lindblad simulation algorithms
\cite{cleve2019efficient,li2023succinct, liu2024simulation}. Thus our bound becomes,
\begin{multline}
    \label{eq:alg1Efin}
    \norm{e^{T\mathcal{L}} - \left(\mathcal{K}_{\delta t} 
    \circ \mathcal{N}_{\delta t} \circ
    \mathcal{K}_{\delta t}\right)^{r}}_{\diamond} \leq \\ 
    \frac{8}{3}
    r(1 + 6c_{0})\left(\pnorm{\mathcal{L}}\delta t\right)^{3} + 
    2r\epsilon_{H}.
\end{multline}

To keep the precision of the total
simulation over time $T$ with-in total error of
$O(\epsilon)$  equation \eqref{eq:alg1Efin} implies we must take $r = O\left(\sqrt{\frac{1 + 6c_{0}}{\epsilon}}(\pnorm{\mathcal{L}}T)^{1.5}\right),$ and $\epsilon_{H} = O(\epsilon
/ r)$. All that is left is to design an implementation of the Hamiltonian simulation
subroutine.

Choosing to use the truncated Taylor series algorithm of \cite{berry2015simulating} 
provides the gate complexity for the subroutine to be 
 $\tilde{O}\left(\delta t \norm{\mathcal{H}}_{pauli} s
n \log(\frac{\delta
t\norm{\mathcal{H}}_{pauli}}{\epsilon_{H}})
\right)$, where $n$ is the number of qubits and $s$ in the number of Pauli terms in the Hamiltonian.   
The total gate complexity is
just the subroutine's cost multiplied by $r$.
Since subroutine
$\mathcal{N}_{\delta t}$ does not require any ancilla to implement, the only
ancilla cost of the algorithm comes from our implementation of
$\mathcal{K}_{\delta t}$. Assuming we can reset the ancilla registers before
each $\mathcal{K}_{\delta t}$ subroutine,
algorithm of Ref.~\cite{berry2015simulating} gives us the ancilla cost for implementing
$\mathcal{K}_{\delta t}$ to be  
$
\tilde{O}\left(\log(s)\log(\delta t / \epsilon_{H})\right).$

Alternatively, we could use an algorithm based on product formulas 
\cite{lloyd1996universal, childs2021theory} to implement  the
Hamiltonian simulation subroutine $\mathcal{K}_{\delta t}$,
which would remove the need for any ancilla, at the expense of a worse gate
complexity to simulate the Hamiltonian dynamics. We can summarize the analysis in the following theorem.
\begin{theorem}
    \label{thm:costs}
    Let $\mathcal{L}=\mathcal{H}+\mathcal{D}$ be a Lindbladian  with Hamiltonian portion $\mathcal{H}$
    and dissipative portion $\mathcal{D}$. Assume $\mathcal{D}$ generates an
    approximately stochastically simulatable channel satisfying
    \eqref{eq:epsilonD}. Then there exists a quantum algorithm that simulates
    $e^{T \mathcal{L}}$ up to precision $\epsilon$ 
    that requires 
    \begin{equation}
        r = O\left(\sqrt{\frac{1 + 6c_{0}}{\epsilon}}(\pnorm{\mathcal{L}}T)^{1.5}\right)
    \end{equation}
    calls to oracles $\mathcal{A}_{\delta t}^{(i)}$, and
    any Hamiltonian simulation subroutine $\mathcal{K}_{\delta t}$ where
    $\delta t = T / r$. No ancilla
    qubits are needed beyond what is required for the Hamiltonian simulation
    subroutine $\mathcal{K}_{\delta t}$.
\end{theorem}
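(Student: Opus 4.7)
The plan is to view Theorem~\ref{thm:costs} as essentially a packaging of the analysis already developed around equations~\eqref{eq:alg1E}--\eqref{eq:alg1Efin}, plus the sampling reduction for $\mathcal{N}_{\delta t}$. The proof reduces to (i) choosing $r$ and $\epsilon_H$ to meet the target diamond-norm precision, (ii) justifying that the compiled circuit, which draws fresh samples $\mathcal{A}^{(i)}_{\delta t}$ in place of $\mathcal{N}_{\delta t}$, implements the intended channel in expectation, and (iii) reading off the oracle and ancilla cost.

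First, I would set $\delta t = T/r$ in \eqref{eq:alg1Efin}, so the product-formula term becomes $\tfrac{8}{3}(1+6c_0)\pnorm{\mathcal{L}}^3 T^3 / r^2$. Demanding this be at most $\epsilon/2$ solves immediately for $r = O\bigl(\sqrt{(1+6c_0)/\epsilon}\,(\pnorm{\mathcal{L}}T)^{3/2}\bigr)$, which matches the claim. Setting $\epsilon_H = O(\epsilon/r)$ then forces the $2r\epsilon_H$ term to also be $O(\epsilon)$, so the overall diamond-norm error is $O(\epsilon)$ under the assumed hypothesis \eqref{eq:epsilonD}. This parameter selection also implicitly verifies the precondition $\bigl(\tfrac{1}{2}\dnorm{\mathcal{H}}+\dnorm{\mathcal{D}}\bigr)\delta t \le 1$ used to invoke the product-formula bound, since $\delta t \to 0$ as $r$ grows with $T$.

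Next I would address the stochastic compilation. Because each gadget's $\mathcal{N}_{\delta t}$ enters linearly in the composed channel and the samples are drawn independently across gadgets, linearity of expectation gives $\mathbb{E}[(\mathcal{K}_{\delta t}\circ \mathcal{A}^{(i_r)}_{\delta t}\circ \mathcal{K}_{\delta t})\cdots (\mathcal{K}_{\delta t}\circ \mathcal{A}^{(i_1)}_{\delta t}\circ \mathcal{K}_{\delta t})] = (\mathcal{K}_{\delta t}\circ \mathcal{N}_{\delta t}\circ \mathcal{K}_{\delta t})^r$. Consequently, any observable's expectation value averaged over independently compiled runs agrees with that of the deterministic channel, and the bound \eqref{eq:alg1Efin} controls the deviation from $e^{T\mathcal{L}}$. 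For cost, each gadget makes one call to some $\mathcal{A}^{(i)}_{\delta t}$ and two calls to $\mathcal{K}_{\delta t}$, yielding $O(r)$ oracle invocations. Since every $\mathcal{A}^{(i)}_{\delta t}$ is either a system-only unitary $U^{(i)}_{\delta t}$ or the fixed preparation $P_{\delta t}$, no ancilla register is required outside of $\mathcal{K}_{\delta t}$, completing the ancilla claim.

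The main obstacle, in my view, is not the parameter counting but the interpretation of the bound under stochastic compilation: the guarantee is an average-channel statement suited to estimating observable expectations, which aligns with the setting declared before Definition~\ref{def:stocsimlindblad}. A stronger single-shot trace-distance guarantee would need concentration arguments and is not claimed. Everything else --- the diamond-norm triangle inequality, the substitution $\delta t = T/r$, and the use of $\dnorm{\mathcal{L}} \le 2\pnorm{\mathcal{L}}$ to arrive at \eqref{eq:alg1Efin} --- is routine bookkeeping once the generalized second-order product formula (Theorem~\ref{thm:timedeptrott2}) and hypothesis \eqref{eq:epsilonD} are taken for granted.
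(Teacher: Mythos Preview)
Your proposal is correct and follows essentially the same approach as the paper: the theorem is indeed a packaging of the preceding analysis, and the paper likewise derives~\eqref{eq:alg1Efin}, substitutes $\delta t = T/r$, solves for $r$, sets $\epsilon_H = O(\epsilon/r)$, and reads off the oracle and ancilla counts from the gadget structure. Your added remarks on the precondition and on the average-channel interpretation of the stochastic compilation are accurate clarifications that the paper leaves implicit.
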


As an example, consider the case of a Lindbladian $\mathcal{L}_{\gamma}$
that is a sum  of an arbitrary
Hamiltonian part  $\mathcal{H}$ and a dissipator $\mathcal{D}_{\gamma}$ which generates the global
depolarization channel with rate $\gamma$.  Applying
Theorem~\ref{thm:costs} gives us the total cost to simulate the dynamics
described by $\mathcal{L}_{\gamma}$ to be $O\left(n  \frac{((\gamma +
\pnorm{\mathcal{H}})T)^{1.5}}{\sqrt{\epsilon}}\right)$
 gates along with 
$\Tilde{O}\left(\log(T / \epsilon)\right)$
ancilla if choosing to use the Hamiltonian simulation subroutine of
\cite{berry2015simulating}.
As another example, consider the case when $\mathcal{L}_{\Gamma}$
generates 1-local dephasing noise. In this case,
the jump
operators have the form $L_{\mu} = \sqrt{\frac{\Gamma}{2}}Z_{\mu}$ where
$Z_{\mu}$ acts on the $\mu$-th qubit. The analysis follows similar to the above
with  $\gamma$ replaced by $n\Gamma / 2$. For details of both calculations see the
SM.

\textit{Time-dependent Lindblad equation.}---%
The algorithm and analysis above can be generalized to include time-dependent
Lindblad superoperators 
with a time-dependent dissipative portion $\mathcal{D}(t)$. 
If the dissipator generates a channel that is approximately stochastically simulatable at
all times, i.e.,
\begin{equation}
    \mathcal{T}e^{\int_{t}^{t+\delta t}\mathcal{D}(t')dt'}(\rho(t)) =
    \mathcal{N}_{t,\delta t}(\rho(t)) + \mathcal{O}(\delta t^{3}),
\end{equation}
holds for all $t$, then the algorithm can be easily modified to simulate such
systems. To do so we leverage Theorem~\ref{thm:timedeptrott2} to divide the
time-dependent dynamics into its time-dependent dissipator and its
Hamiltonian part.
\begin{theorem}
    \label{thm:timedeptrott2}
    Let $\mathcal{H}$ and $\mathcal{D}(t)$ be Lindbladians and $0 \leq
    \left(\frac{\dnorm{\mathcal{H}}}{2} + \sup_{t}\dnorm{\mathcal{D}(t)}\right)\delta t
    \leq 1$, then
    \begin{align}
&\dnorm{\mathcal{T}e^{\int_{t}^{t+\delta t}(\mathcal{H} +\mathcal{D}(t'))dt'} - e^{\frac{\delta t}{2}\mathcal{H}}\circ\mathcal{T}e^{\int_{t}^{t+\delta t}\mathcal{D}(t')dt'} \circ e^{\frac{\delta t}{2}\mathcal{H}}}  \nonumber \\
& \quad  \leq \frac{\sup_{t}\dnorm{[\mathcal{H}, \mathcal{D}(t)]}}{3}\left(\frac{\dnorm{\mathcal{H}}}{2}  \sup_{t}\dnorm{\mathcal{D}(t)}\right)\delta t^{3}.
    \end{align} 
\end{theorem}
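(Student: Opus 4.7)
The plan is to adapt the standard Strang-splitting error analysis (following the cited Werner 2016) to the time-dependent dissipator via three stages: interaction-picture reduction, a Duhamel representation of the difference, and extraction of the $\delta t^{3}$ scaling via a midpoint cancellation. Shift time so that the interval is $[0,\delta t]$, and write $V:=\mathcal{T}e^{\int_{0}^{\delta t}(\mathcal{H}+\mathcal{D}(s))ds}$ and $W:=e^{\delta t\mathcal{H}/2}\circ\mathcal{T}e^{\int_{0}^{\delta t}\mathcal{D}(s)ds}\circ e^{\delta t\mathcal{H}/2}$. Passing to the interaction picture with respect to the time-independent $\mathcal{H}$, and defining $\tilde{\mathcal{D}}(s):=e^{-s\mathcal{H}}\circ\mathcal{D}(s)\circ e^{s\mathcal{H}}$ and $\tilde{\mathcal{D}}^{m}(s):=e^{-\delta t\mathcal{H}/2}\circ\mathcal{D}(s)\circ e^{\delta t\mathcal{H}/2}$, one verifies $V=e^{\delta t\mathcal{H}}\circ\mathcal{T}e^{\int_{0}^{\delta t}\tilde{\mathcal{D}}(s)ds}$ and $W=e^{\delta t\mathcal{H}}\circ\mathcal{T}e^{\int_{0}^{\delta t}\tilde{\mathcal{D}}^{m}(s)ds}$ (the latter by pulling $e^{\delta t\mathcal{H}/2}$ through the time-ordered exponential as a similarity transformation). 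Since $e^{\delta t\mathcal{H}}$ is CPTP with unit diamond norm, it suffices to bound the diamond-norm difference of the two interaction-picture time-ordered exponentials.

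Next, apply the Duhamel identity for the difference of two time-ordered exponentials, which writes $V-W=e^{\delta t\mathcal{H}}\circ\int_{0}^{\delta t}\mathcal{P}_{>}(s)\circ(\tilde{\mathcal{D}}(s)-\tilde{\mathcal{D}}^{m}(s))\circ\mathcal{P}_{<}(s)\,ds$, where $\mathcal{P}_{>}$ and $\mathcal{P}_{<}$ are the CPTP propagators of $\tilde{\mathcal{D}}$ and $\tilde{\mathcal{D}}^{m}$ respectively. The integrand gap is controlled by the fundamental theorem of calculus applied to the conjugation by $e^{-u\mathcal{H}}$: specifically $\tilde{\mathcal{D}}(s)-\tilde{\mathcal{D}}^{m}(s)=-\int_{\delta t/2}^{s}e^{-u\mathcal{H}}\circ[\mathcal{H},\mathcal{D}(s)]\circ e^{u\mathcal{H}}\,du$, whose diamond norm is at most $|s-\delta t/2|\sup_{t}\dnorm{[\mathcal{H},\mathcal{D}(t)]}$ since $e^{\pm u\mathcal{H}}$ is CPTP.

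The main obstacle, and the step that produces the $\delta t^{3}$ scaling rather than the naive $\delta t^{2}$ bound (since $\int_{0}^{\delta t}|s-\delta t/2|\,ds=\delta t^{2}/4$), is the third stage: expand the flanking propagators $\mathcal{P}_{>}(s)$ and $\mathcal{P}_{<}(s)$ to first order around the identity, and simultaneously expand $e^{-u\mathcal{H}}\circ[\mathcal{H},\mathcal{D}(s)]\circ e^{u\mathcal{H}}$ to next order in $u$, so that the antisymmetric moment identity $\int_{0}^{\delta t}(s-\delta t/2)\,ds=0$ eliminates the leading linear contribution. The surviving pieces are nested commutators of the form $[\mathcal{H},[\mathcal{H},\mathcal{D}(s)]]$ (from the higher-order expansion of the conjugation) and $[\mathcal{D}(s'),[\mathcal{H},\mathcal{D}(s)]]$ (from the first-order expansion of $\mathcal{P}_{>}$ and $\mathcal{P}_{<}$), which are bounded by the triangle inequality as $2\dnorm{\mathcal{H}}\sup_{t}\dnorm{[\mathcal{H},\mathcal{D}(t)]}$ and $2\sup_{t}\dnorm{\mathcal{D}(t)}\sup_{t}\dnorm{[\mathcal{H},\mathcal{D}(t)]}$, respectively. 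These combine to produce the factor $\tfrac{\dnorm{\mathcal{H}}}{2}+\sup_{t}\dnorm{\mathcal{D}(t)}$ in the final bound, while the hypothesis $(\tfrac{\dnorm{\mathcal{H}}}{2}+\sup_{t}\dnorm{\mathcal{D}(t)})\delta t\leq 1$ is used to absorb all higher-order Duhamel remainders into the leading $\delta t^{3}$ term.
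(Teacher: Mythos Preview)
Your interaction-picture/Duhamel strategy is a genuinely different route from the paper's proof, which instead factors out the Strang product via its \emph{inverse} (backward) propagators, writes the error as $\dnorm{T_{\mathcal{H}}T_{\mathcal{D}}T_{\mathcal{H}}\,F(t,s)}$, and then applies the fundamental theorem of calculus three times to $F(t,s)$, bounding the resulting inverse propagators with the exponential estimate $\dnorm{T_{\mathcal{L}}^{-}(t,s)}\le e^{\int_s^t\dnorm{\mathcal{L}}}$ rather than by $1$. Your approach is cleaner when it works, but as written it has a real gap precisely at the step that distinguishes the time-dependent theorem from the time-independent one.

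The issue is the midpoint cancellation. After Duhamel, the zeroth-order piece (propagators $=\mathcal{I}$, conjugations $=\mathcal{I}$) is
\[
-\int_{0}^{\delta t}(s-\tfrac{\delta t}{2})\,[\mathcal{H},\mathcal{D}(s)]\,ds,
\]
and the moment identity $\int_{0}^{\delta t}(s-\tfrac{\delta t}{2})\,ds=0$ only kills this if $[\mathcal{H},\mathcal{D}(s)]$ is \emph{constant in $s$}. For time-dependent $\mathcal{D}$ the surviving contribution is of order $\delta t^{3}$ but carries a factor $\dnorm{[\mathcal{H},\partial_t\mathcal{D}(t)]}$, not $\dnorm{[\mathcal{H},\mathcal{D}(t)]}$. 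You never mention this term, yet it does not fit into either of the two nested-commutator families you list, and it is not bounded by $\sup_t\dnorm{[\mathcal{H},\mathcal{D}(t)]}\bigl(\tfrac{\dnorm{\mathcal{H}}}{2}+\sup_t\dnorm{\mathcal{D}(t)}\bigr)$. The paper's proof tracks exactly this quantity (denoted $D=\sup_t\dnorm{[\mathcal{H},\partial_t\mathcal{D}(t)]}$) through the triple integral and shows that its leading $\delta t^{3}$ coefficient \emph{cancels} between two separate contributions; without exhibiting an analogous cancellation in your framework, you cannot arrive at the stated bound.

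A secondary point: you invoke ``$e^{\pm u\mathcal{H}}$ is CPTP'' to bound the conjugations by $1$. This is true when $\mathcal{H}$ is a Hamiltonian generator (the intended application), but the theorem is stated for arbitrary Lindbladians $\mathcal{H}$, for which $e^{-u\mathcal{H}}$ is generally not contractive. The paper handles this via the backward-evolution bound of Lemma~\ref{lm:bte}, which is where the hypothesis $\bigl(\tfrac{\dnorm{\mathcal{H}}}{2}+\sup_t\dnorm{\mathcal{D}(t)}\bigr)\delta t\le 1$ enters to control the exponential growth.
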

\begin{proof}
    See Supplementary Material.
\end{proof}
Similar to the time-independent case, we implement subroutines to approximate
the dynamics generated by these two generators. The only difference lies in how
we implement the dissipative subroutine. In the time-dependent case, we
implement the stochastically simulatable channels $\mathcal{N}_{t,\delta t}$ as
our subroutine approximating the dissipative dynamics. Since
$\mathcal{N}_{t,\delta t}$ explicitly depends on time, we will implement
different stochastically simulatable channels at each time step $t$ where we
require a dissipative subroutine. The rest of the analysis of the algorithm is similar to the time-independent
case.  The total costs as shown above hold in
the time-dependent scenario with the only difference being, replacing
$\pnorm{\mathcal{L}}$ with $ \sup_{t}\pnorm{\mathcal{L}(t)}$ and $c_{0}$ with
$\sup_{t}c_{0}(t)$ in Theorem~\ref{thm:costs}.

\textit{Generalized noise-model by allowing ancilla qubits.}---%
At the expense of
adding ancilla qubits and applying unitary operations on the joint
system, one can extend the scope of the proposed method to
simulate a larger family of noise models. The Stinespring dilation theorem
guarantees that any CPTP map acting on $2^l\times 2^l$ density operators can be
implemented by adding  $2l$ ancilla qubits and performing a (generally
complicated) unitary operation on the joint system \cite{Nielsen_Chuang_2010}.
However, the probabilistic implementation of channels can reduce this ancilla
requirement.
\begin{remark}
The algorithm can implement any  CPTP map acting on  $2^l\times 2^l$ density operators using at most $l$ ancilla qubits.
\end{remark}

This observation follows from the fact that any such a map can be expressed as 
a convex combination of  
extremal channels with each
extremal channel only having at most $2^l$ many Kraus operators
\cite{choi1975completely}. Therefore implementation of dilatation of any of the extremal channels requires at most $l$ ancilla qubits. Given that only one extremal channel needs to be implemented in each compilation, at most $l$ ancilla suffices in each run. The ancilla qubits can be reset after the application of the channel.

Simulation of local noise, naturally arising in many physical systems, is of particular interest. A local (but possibly correlated) noise model can be described by a CPTP map,
\begin{equation}
    \mathcal{E}(\rho) =
    \sum_{i}\lambda_{i}\mathcal{G}^{(i)}_{0}\otimes ... \otimes
    \mathcal{G}^{(i)}_{n}(\rho),
\end{equation}
where each $\mathcal{G}^{(i)}_{j}$ acts on the $j$th qubit, with $l=1$. If each
$\mathcal{G}^{(i)}_{j}$ is unital then we can expand each operation
in terms of a convex combination of single-qubit unitaries. This gives us a
description of the operation that satisfies Definition~\ref{def:stocsimlindblad}
where the $U_{i}(t)$ operations are $n$-fold tensor products of
single-qubit unitaries, thus efficiently implementable. If 
$\mathcal{G}^{(i)}_{j}$ are not unital, for example such as amplitude damping, then we only need one ancilla rather than two, to implement each of the local maps. Therefore the full noisy process $\mathcal{E}(\rho)$ can be simulated by adding at most $n$ extra
ancilla qubits and 2-qubit gates, which is less than the 
$2n$ ancilla
needed if we chose to directly dilate.

\textit{Incorporating time-correlations.}---%
As discussed before, to simulate the Markovian master equation, we implement each dissipation block of $\mathcal{N}_{t,\delta t}$ by generating samples independent from the other blocks. However, it is straightforward to modify the algorithm to simulate dynamics that have correlations in time by simply introducing correlations in the sampling of dissipation blocks used in the compilation. In particular, the probabilities and operations in equations \eqref{eq:operation} and \eqref{eq:prob} can be generalized to define joint probabilities and operations over multiple time steps. Therefore, the domain of master equations simulable by the presented algorithm goes beyond the Markovian master equation we focused on in this work.

\textit{Discussions.}---%
We have presented a novel quantum algorithm for simulating
restricted, but physically motivated, types of dynamics generated by the Lindblad master equations. Our algorithm leverages classical
randomness in the compilation phase of the quantum circuit to circumvent the
overhead of deterministically implementing all of the jump operators needed to simulate the dynamics. Our approach eliminates the
explicit scaling on the number
of jump operators as it appears in previous algorithms, with the only scaling depending on
the norm of the Lindblad superoperator, which all other Lindblad simulation circuit
scalings depend on. This strategy also has the benefit of reducing ancilla costs since the only
ancilla requirements come from the choice of Hamiltonian
simulation subroutine. We also
treat the closed system dynamics on a different footing than the dissipative
dynamics. This approach allows us to leverage the wealth of available closed-system quantum algorithms in addition to extending the algorithm to deal with scenarios where dissipation may be time-dependent or correlated over time.

It would
be interesting to see if our approach of leveraging classical randomness and
treating the dissipative evolution differently than the closed system evolution
can be used to achieve better scaling with respect to $T$ and
$\epsilon$, ideally $O(T \log(\frac{1}{\epsilon}))$ of \cite{cleve2019efficient} but without the dependence on the number of jump operators and using only a few ancillary qubits. Extending the
algorithm to incorporate more general non-Markovian dynamics is another interesting future direction.

\begin{acknowledgments}
\textit{Acknowledgments.}---%
This work is supported by Sandia National Laboratories’ Laboratory Directed Research and Development program (Contract \#2534192). Additional support by DOE's Express: 2023 Exploratory Research For Extreme-scale Science Program under Award Number DE-SC0024685 is acknowledged.
\end{acknowledgments}

\bibliography{refs}
\bibliographystyle{apsrev4-2}

\clearpage
\setcounter{table}{0}
\renewcommand{\thetable}{S\arabic{table}}%
\setcounter{figure}{0}
\renewcommand{\thefigure}{S\arabic{figure}}%
\setcounter{section}{0}
\setcounter{equation}{0}
\renewcommand{\theequation}{S\arabic{equation}}%

\onecolumngrid

\section{Time-Dependent second Order Product formula}  
\label{Sectimedep}

In this section, we prove a time-dependent second-order product formula bound for
Liouvillians. For simplicity, we assume that the Liouvillian $\mathcal{L}$ splits
up into a time-independent component $\mathcal{H}$ and a time-dependent
component $\mathcal{D}(t)$. This bound is used in the main text when
generalizing our algorithm to incorporate time-dependent dissipators.

Before proving the main result we cite an important Lemma from
Kliesch et al. \cite{kliesch2011dissipative}  that will be useful in the proof.
\begin{lemma}[Backward time evolution \cite{kliesch2011dissipative}]
    \label{lm:bte}
    Let $\mathcal{L}$ be a Liouvillian. For $t > s$:
    \begin{enumerate}
        \item $T_{\mathcal{L}}(t,s)=\mathcal{T}e^{\int_{s}^{t}\mathcal{L}(t')dt'}$ is invertible and its inverse is
            $T_{\mathcal{L}}^{-}(t,s)$ as defined by \begin{equation}
            \partial_{t}T_{\mathcal{L}}^{-}(t,s) = -
        T_{\mathcal{L}}^{-}(t,s)\mathcal{L} \end{equation}
        \item If the Liouvillian $\mathcal{L}$ is piecewise continuous in time
            then \begin{equation} \dnorm{T_{\mathcal{L}}^{-}(t,s)} \leq
            e^{\int_{s}^{t}\dnorm{\mathcal{L}(t')}dt'} \end{equation}
    \end{enumerate}

\end{lemma}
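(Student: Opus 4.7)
The plan is to split the lemma into the algebraic invertibility claim and the analytic diamond-norm bound. For the first part, I would \emph{define} the candidate inverse $T_{\mathcal{L}}^{-}(t,s)$ as the unique solution of the ODE stated in the lemma with initial condition $T_{\mathcal{L}}^{-}(s,s) = \mathcal{I}$, and verify that it really is the inverse of the forward Dyson series. Since $T_{\mathcal{L}}(t,s)$ itself satisfies $\partial_t T_{\mathcal{L}}(t,s) = \mathcal{L}(t)\,T_{\mathcal{L}}(t,s)$ with $T_{\mathcal{L}}(s,s) = \mathcal{I}$, differentiating the product $f(t) := T_{\mathcal{L}}^{-}(t,s)\,T_{\mathcal{L}}(t,s)$ gives
\begin{equation*}
\partial_t f(t) = -T_{\mathcal{L}}^{-}(t,s)\mathcal{L}(t)T_{\mathcal{L}}(t,s) + T_{\mathcal{L}}^{-}(t,s)\mathcal{L}(t)T_{\mathcal{L}}(t,s) = 0,
\end{equation*}
and $f(s)=\mathcal{I}$ then forces $T_{\mathcal{L}}^{-}(t,s)\,T_{\mathcal{L}}(t,s) = \mathcal{I}$ throughout $[s,t]$. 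In the finite-dimensional superoperator setting a left inverse is automatically a two-sided inverse, so the invertibility of $T_{\mathcal{L}}(t,s)$ and the advertised characterization of its inverse both follow.

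For the second claim, I would recast the backward ODE as the Volterra equation
\begin{equation*}
T_{\mathcal{L}}^{-}(t,s) = \mathcal{I} - \int_s^t T_{\mathcal{L}}^{-}(\tau,s)\,\mathcal{L}(\tau)\, d\tau,
\end{equation*}
which is legitimate under piecewise continuity of $\mathcal{L}$ by splitting the integral across the finitely many continuity intervals. Taking the diamond norm, using the triangle inequality for integrals together with sub-multiplicativity $\dnorm{AB}\leq \dnorm{A}\dnorm{B}$, yields the scalar integral inequality
\begin{equation*}
\dnorm{T_{\mathcal{L}}^{-}(t,s)} \leq 1 + \int_s^t \dnorm{T_{\mathcal{L}}^{-}(\tau,s)}\,\dnorm{\mathcal{L}(\tau)}\, d\tau.
\end{equation*}
Applying Gr\"onwall's inequality with weight $\dnorm{\mathcal{L}(\tau)}$, which is integrable on the compact interval since $\mathcal{L}$ is piecewise continuous, then delivers the exponential bound $\dnorm{T_{\mathcal{L}}^{-}(t,s)} \leq \exp\bigl(\int_s^t \dnorm{\mathcal{L}(t')}\,dt'\bigr)$.

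The main obstacle is analytic bookkeeping rather than anything substantive: one must justify that $T_{\mathcal{L}}(t,s)$ and $T_{\mathcal{L}}^{-}(t,s)$ are genuinely (strongly) differentiable in $t$ when $\mathcal{L}$ is only piecewise continuous, so that the formal product-rule manipulation of $\partial_t f(t)$ is legitimate. The standard fix is to work separately on each continuity subinterval, apply the ODE argument there, and paste the pieces together using continuity of the propagators in $t$, together with the semigroup property $T_{\mathcal{L}}(t,s) = T_{\mathcal{L}}(t,u)\,T_{\mathcal{L}}(u,s)$ at interface points. Once this is in place, the remaining ingredients---sub-multiplicativity of the diamond norm on superoperators and the scalar Gr\"onwall lemma---are textbook, so assembling them correctly is routine.
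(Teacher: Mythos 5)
Your proof is correct. The paper itself does not prove this lemma---it simply defers to Kliesch et al.---and your argument (differentiating $T_{\mathcal{L}}^{-}(t,s)\,T_{\mathcal{L}}(t,s)$ to show the derivative vanishes, noting that a left inverse is two-sided in finite dimensions, and then combining the Volterra form of the backward equation with sub-multiplicativity of the diamond norm and Gr\"onwall's inequality) is exactly the standard route taken in that reference, including the correct handling of piecewise continuity by working on continuity subintervals.
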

\begin{proof}
    See \cite{kliesch2011dissipative}.
\end{proof}
We now prove the main result. 
\begin{theorem*}[3] (restated.) 
    Let $\mathcal{H}$ and $\mathcal{D}(t)$ be Liouvillians and $0 \leq
    \left(\frac{\dnorm{\mathcal{H}}}{2} +
    \sup_{t}\dnorm{\mathcal{D}(t)}\right)(t-s)
    \leq 1$, then
    \begin{equation}
        \label{eq:timedepbound}
        \dnorm{\mathcal{T}e^{\int_{s}^{t}(\mathcal{H} +
        \mathcal{D}(t'))dt'} - e^{\frac{(t-s)}{2}
        \mathcal{H}}\circ\mathcal{T}e^{\int_{s}^{t}\mathcal{D}(t')dt'} \circ
        e^{\frac{(t-s)}{2}\mathcal{H}}} \leq
        \frac{\sup_{t}\dnorm{[\mathcal{H},
        \mathcal{D}(t)]}}{3}\left(\frac{\dnorm{\mathcal{H}}}{2} +
        \sup_{t}\dnorm{\mathcal{D}(t)}\right)(t-s)^{3}. 
    \end{equation}
\end{theorem*}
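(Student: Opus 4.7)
Set $\tau := t - s$, and write $V(\tau) := \mathcal{T}e^{\int_s^t[\mathcal{H} + \mathcal{D}(u)]du}$ and $W(\tau) := e^{\tau\mathcal{H}/2}\mathcal{T}e^{\int_s^t \mathcal{D}(u)du}e^{\tau\mathcal{H}/2}$ for the two maps in the inequality. My strategy is to interpolate between $W$ and $V$ with a single scalar parameter, differentiate in that parameter, and exploit a trapezoidal-rule cancellation to extract $\tau^3$ scaling. Concretely, I introduce
\begin{equation*}
f(\lambda) := e^{(1-\lambda)\tau\mathcal{H}/2}\,\mathcal{T}e^{\int_s^t [\lambda\mathcal{H} + \mathcal{D}(u)] du}\,e^{(1-\lambda)\tau\mathcal{H}/2}, \qquad \lambda \in [0,1],
\end{equation*}
so that $f(0) = W(\tau)$ and $f(1) = V(\tau)$. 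Then $V - W = \int_0^1 f'(\lambda)\,d\lambda$, and by the triangle inequality for $\dnorm{\cdot}$ it suffices to bound $\dnorm{f'(\lambda)}$ uniformly in $\lambda$.

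\textbf{Derivative of the interpolation.} Differentiating $f$ in $\lambda$, using that $\mathcal{H}$ commutes with $e^{(1-\lambda)\tau\mathcal{H}/2}$ together with the usual parameter-derivative formula for a time-ordered exponential, I arrive at
\begin{equation*}
f'(\lambda) = -\tfrac{\tau}{2}\bigl(\mathcal{H} f(\lambda) + f(\lambda)\mathcal{H}\bigr) + \int_s^t \phi_\lambda(u)\,du, \qquad \phi_\lambda(u) := G_\lambda(u)\,\mathcal{H}\,H_\lambda(u),
\end{equation*}
where $G_\lambda(u)$ and $H_\lambda(u)$ are the left/right propagators obtained by splitting the middle time-ordered exponential at the intermediate time $u$ and attaching the outer $e^{(1-\lambda)\tau\mathcal{H}/2}$ factors. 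Each is a composition of CPTP maps, so Lemma~\ref{lm:bte} gives $\dnorm{G_\lambda(u)}, \dnorm{H_\lambda(u)} \leq 1$ uniformly in $u,\lambda$, and by construction $G_\lambda(u)H_\lambda(u) = f(\lambda)$ for every $u$. The crucial observation is that $\phi_\lambda(s) = f(\lambda)\mathcal{H}$ and $\phi_\lambda(t) = \mathcal{H} f(\lambda)$ (again using that $\mathcal{H}$ commutes with the outer factors), whence $f'(\lambda)$ is exactly the trapezoidal-rule remainder $\int_s^t \phi_\lambda(u)\,du - \tfrac{\tau}{2}[\phi_\lambda(s) + \phi_\lambda(t)]$.

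\textbf{Extracting $\tau^3$ and assembling the bound.} Integration by parts against the antiderivative $\Sigma(u) := (u-s)(t-u)/2$ of $\sigma(u) := (s+t)/2 - u$, which vanishes at both endpoints and satisfies $\int_s^t \Sigma(u)du = \tau^3/12$, converts the trapezoidal remainder into $f'(\lambda) = -\int_s^t \Sigma(u)\phi_\lambda''(u)\,du$. A direct computation using the propagator ODEs $G_\lambda'(u) = -G_\lambda(u)(\lambda\mathcal{H} + \mathcal{D}(u))$ and $H_\lambda'(u) = (\lambda\mathcal{H} + \mathcal{D}(u))H_\lambda(u)$ yields $\phi_\lambda'(u) = G_\lambda(u)[\mathcal{H},\mathcal{D}(u)]H_\lambda(u)$, exposing precisely the commutator in the target bound, and differentiating once more produces a double-commutator term $-G_\lambda(u)[\lambda\mathcal{H}+\mathcal{D}(u),[\mathcal{H},\mathcal{D}(u)]]H_\lambda(u)$. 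Applying the estimate $\dnorm{[X,[\mathcal{H},\mathcal{D}]]} \leq 2\dnorm{X}\dnorm{[\mathcal{H},\mathcal{D}]}$ together with the CPTP bounds on $G_\lambda, H_\lambda$, integrating first in $u$ (contributing $\tau^3/12$), and then integrating in $\lambda\in[0,1]$ (which produces $\int_0^1(\lambda\dnorm{\mathcal{H}} + \sup_t\dnorm{\mathcal{D}(t)})\,d\lambda = \tfrac{\dnorm{\mathcal{H}}}{2} + \sup_t\dnorm{\mathcal{D}(t)}$) gives the claimed inequality.

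\textbf{Main obstacle.} The delicate point is that differentiating $\phi_\lambda'(u) = G_\lambda[\mathcal{H},\mathcal{D}(u)]H_\lambda$ in $u$ naively also produces a $G_\lambda[\mathcal{H},\partial_u\mathcal{D}(u)]H_\lambda$ piece, which would contaminate the bound with a $\dnorm{\partial_t\mathcal{D}}$ term absent from the statement. I would handle this by a further integration by parts in $u$, again exploiting $\Sigma(s)=\Sigma(t)=0$, that transfers the $u$-derivative off $\mathcal{D}(u)$ and onto the product $\Sigma(u)G_\lambda(u)\cdots H_\lambda(u)$; invoking the propagator ODEs reabsorbs the resulting expression into the same double-commutator structure already controlled above. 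The remaining steps are routine diamond-norm sub-multiplicativity combined with the uniform CPTP propagator bounds supplied by Lemma~\ref{lm:bte}.
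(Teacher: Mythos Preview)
Your interpolation/trapezoidal-remainder strategy is clean and genuinely different from the paper's iterated fundamental-theorem-of-calculus argument on inverse propagators, and everything up to and including the identity $\phi_\lambda'(u)=G_\lambda(u)[\mathcal{H},\mathcal{D}(u)]H_\lambda(u)$ is correct. The gap is in your handling of the $[\mathcal{H},\partial_u\mathcal{D}(u)]$ piece of $\phi_\lambda''$. Carrying out the integration by parts you describe---moving the $u$-derivative off $[\mathcal{H},\mathcal{D}(u)]$ and onto $\Sigma(u)G_\lambda(u)(\cdot)H_\lambda(u)$, then using the propagator ODEs---gives exactly
\[
\int_s^t \Sigma\, G_\lambda\,[\mathcal{H},\partial_u\mathcal{D}]\,H_\lambda\,du \;=\; -\int_s^t \sigma\, G_\lambda\,[\mathcal{H},\mathcal{D}]\,H_\lambda\,du \;+\; \int_s^t \Sigma\, G_\lambda\,[\lambda\mathcal{H}+\mathcal{D},[\mathcal{H},\mathcal{D}]]\,H_\lambda\,du.
\]
The second term is indeed your double-commutator $I_1$, but the first term is $-\int_s^t\sigma\,\phi_\lambda'\,du=-f'(\lambda)$ itself. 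Substituting back into $f'(\lambda)=I_1-\int\Sigma G_\lambda[\mathcal{H},\partial_u\mathcal{D}]H_\lambda$ yields the tautology $f'(\lambda)=f'(\lambda)$: the IBP is circular and extracts no new bound.

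This is not merely a missing trick. The paper's own proof carries a separate $D:=\sup_t\dnorm{[\mathcal{H},\partial_t\mathcal{D}(t)]}$ contribution all the way to the final integration, where it is argued to cancel at leading order; and a direct Taylor expansion in the example $\mathcal{D}(u)=u\mathcal{D}_0$ (with $s=0$) gives $V-W=-\tfrac{t^3}{12}[\mathcal{H},\mathcal{D}_0]+O(t^4)$, which is $\Theta(t^3)$ while the stated right-hand side is $O(t^4)$ on $[0,t]$. So the $\partial_t\mathcal{D}$ contribution is genuine and cannot be eliminated by a formal IBP; your argument would at best yield the bound with an additional $\tfrac{\tau^3}{12}\sup_t\dnorm{[\mathcal{H},\partial_t\mathcal{D}(t)]}$ term, not the inequality as stated.
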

\begin{proof}
    Define $T_{\mathcal{L}}(t, s) =
    \mathcal{T}e^{\int_{s}^{t}\mathcal{L}(t')dt'}$ and $T_{\mathcal{L}}^{-}(t,
    s)$ according to Lemma~\ref{lm:bte}. Both
    $T_{\mathcal{L}}(t, s)$ and $T_{\mathcal{L}}^{-}(t,s)$ are the unique
    solutions to the differential equations 
    \begin{equation}
        \partial_{t}T_{\mathcal{L}}(t,s)
        = \mathcal{L}(t) T_{\mathcal{L}}(t,s) \text{ and }
        \partial_{t}T_{\mathcal{L}}^{-}(t,s) = -T_{\mathcal{L}}^{-}(t,s)
        \mathcal{L}(t)
    \end{equation}
    with the initial condition $T_{\mathcal{L}}(s,s) = T_{\mathcal{L}}^{-}(s,s) =
    \mathcal{I}$.
    Rewriting the left hand side of equation~\eqref{eq:timedepbound}
    in terms of our definitions gives
    \begin{equation}
        \epsilon = \dnorm{T_{\mathcal{H} + \mathcal{D}}(t,s) - T_{\mathcal{H}}\left(t,
        \frac{t+s}{2}\right)T_{\mathcal{D}}(t,
        s)T_{\mathcal{H}}\left(\frac{t+s}{2}, s\right)},
    \end{equation}
    which can be rewritten as
    \begin{align}
        \label{eq:redefs}
        \epsilon &=
        \dnorm{T_{\mathcal{H}}\left(t,
        \frac{t+s}{2}\right)T_{\mathcal{D}}(t,
        s)T_{\mathcal{H}}\left(\frac{t+s}{2},
        s\right)\left(T_{\mathcal{H}}^{-}\left(\frac{t+s}{2},
        s\right)T_{\mathcal{D}}^{-}(t,s)T_{\mathcal{H}}^{-}\left(t,\frac{t+s}{2}\right)T_{\mathcal{H}
        + \mathcal{D}}(t,s) -
        \mathcal{I} \right)} \\ 
        &= \dnorm{T_{\mathcal{H}}\left(t,
        \frac{t+s}{2}\right)T_{\mathcal{D}}(t,
        s)T_{\mathcal{H}}\left(\frac{t+s}{2},
        s\right)F(t,s)}
    \end{align}
    where we have defined $F(t,s) = T_{\mathcal{H}}^{-}\left(\frac{t+s}{2},
    s\right) T_{\mathcal{D}}^{-}(t,s)
    T_{\mathcal{H}}^{-}\left(t,\frac{t+s}{2}\right)
    T_{\mathcal{H}+\mathcal{D}}(t,s) -
    \mathcal{I}$. Focusing on $F(t,s)$ and using the
    fundamental theorem of calculus one finds:
    \begin{align}
        F(t,s) &= \int_{s}^{t}\partial_{r}\left(T_{\mathcal{H}}^{-}\left(\frac{r+s}{2},
        s\right) T_{\mathcal{D}}^{-}(r,s)
        T_{\mathcal{H}}^{-}\left(r,\frac{r+s}{2}\right)
        T_{\mathcal{H}+\mathcal{D}}(r,s)\right)dr \nonumber \\
        &= \int_{s}^{t}\left\{\partial_{r}\left[T_{\mathcal{H}}^{-}\left(\frac{r+s}{2},
        s\right)\right] T_{\mathcal{D}}^{-}(r,s)
        T_{\mathcal{H}}^{-}\left(r, \frac{r+s}{2}\right)
        T_{\mathcal{H}+\mathcal{D}}(r,s) \right. \nonumber \\
        &+ T_{\mathcal{H}}^{-}\left(\frac{r+s}{2}, s\right)
            \partial_{r}\left[T_{\mathcal{D}}^{-}(r,s)\right]
        T_{\mathcal{H}}^{-}\left(r,\frac{r+s}{2}\right)
        T_{\mathcal{H}+\mathcal{D}}(r,s) \nonumber \\
        &+ T_{\mathcal{H}}^{-}\left(\frac{r+s}{2}, s\right)
        T_{\mathcal{D}}^{-}(r,s)
        \partial_{r}\left[T_{\mathcal{H}}^{-}\left(r,\frac{r+s}{2}\right)\right]
        T_{\mathcal{H}+\mathcal{D}}(r,s) \nonumber \\
        & \left. + T_{\mathcal{H}}^{-}\left(\frac{r+s}{2}, s\right)
        T_{\mathcal{D}}^{-}(r,s)
        T_{\mathcal{H}}^{-}\left(r,\frac{r+s}{2}\right)
        \partial_{r}\left[T_{\mathcal{H}+\mathcal{D}}(r,s)\right]\right\}dr.
    \end{align}
    Applying identities: $\partial_{t}T_{\mathcal{H}+\mathcal{D}}(t,s) =
    (\mathcal{H} + \mathcal{D}(t))T_{\mathcal{H}+\mathcal{D}}(t,s)$,
    $\partial_{t}T_{\mathcal{H}}^{-}\left(\frac{t + s}{2},
    s\right) = -\frac{1}{2} T_{\mathcal{H}}^{-}\left(\frac{t + s}{2},
    s\right)\mathcal{H}$, 
    $\partial_{t}T_{\mathcal{H}}^{-}\left(t, \frac{t + s}{2}\right) =
    -\frac{1}{2}T_{\mathcal{H}}^{-}\left(t, \frac{t + s}{2}\right)\mathcal{H}$,
    and $\partial_{t}T_{\mathcal{D}}^{-}(t,s) = -
    T_{\mathcal{D}}^{-}(t,s)\mathcal{D}(t)$ yields,
    \begin{align}
        F(t,s) &= \int_{s}^{t}T_{\mathcal{H}}^{-}\left(\frac{r+s}{2},
        s\right)\left\{-\frac{\mathcal{H}}{2} T_{\mathcal{D}}^{-}(r,s)
        T_{\mathcal{H}}^{-}\left(r,\frac{r+s}{2}\right) \right. \nonumber \\
        &- T_{\mathcal{D}}^{-}(r,s)\mathcal{D}(r)
        T_{\mathcal{H}}^{-}\left(\frac{r+s}{2},s\right) \nonumber \\
        &- 
        T_{\mathcal{D}}^{-}(r,s)
        T_{\mathcal{H}}^{-}\left(r,\frac{r+s}{2}\right)\frac{\mathcal{H}}{2} \nonumber \\
        & \left. + T_{\mathcal{D}}^{-}(r,s)
        T_{\mathcal{H}}^{-}\left(r,\frac{r+s}{2}\right)
        (\mathcal{H} +
        \mathcal{D}(r))\right\}T_{\mathcal{H}+\mathcal{D}}(r,s)dr \nonumber \\
        &= \int_{s}^{t}T_{\mathcal{H}}^{-}\left(\frac{r+s}{2},
        s\right)\left\{-\frac{\mathcal{H}}{2} T_{\mathcal{D}}^{-}(r,s)
        T_{\mathcal{H}}^{-}\left(r,\frac{r+s}{2}\right) \right. \nonumber \\
        &- T_{\mathcal{D}}^{-}(r,s)\mathcal{D}(r)
        T_{\mathcal{H}}^{-}\left(\frac{r+s}{2},s\right) \nonumber \\
        &\left. + T_{\mathcal{D}}^{-}(r,s)
        T_{\mathcal{H}}^{-}\left(r,\frac{r+s}{2}\right)
        \left(\frac{\mathcal{H}}{2} +
        \mathcal{D}(r)\right)\right\}T_{\mathcal{H}+\mathcal{D}}(r,s)dr\nonumber \\
        &= -\int_{s}^{t}T_{\mathcal{H}}^{-}\left(\frac{r+s}{2},
        s\right)\left\{\left[T_{\mathcal{D}}^{-}(r,s),\mathcal{D}(r)\right]
        T_{\mathcal{H}}^{-}\left(\frac{r+s}{2},s\right)\right. \nonumber \\
        &\left. + \left[\frac{\mathcal{H}}{2} +
                \mathcal{D}(r),T_{\mathcal{D}}^{-}(r,s)
        T_{\mathcal{H}}^{-}\left(r,\frac{r+s}{2}\right)\right]\right\}T_{\mathcal{H}+\mathcal{D}}(r,s)dr.
    \end{align} 
    Applying the fundamental theorem of calculus to the terms inside each
    of the commutators and using the above derivative identities again yields,
    \begin{align}
        F(t,s) &= -\int_{s}^{t}T_{\mathcal{H}}^{-}\left(\frac{r+s}{2},
            s\right)\int_{s}^{r}\left\{\partial_{u}\left(\left[T_{\mathcal{D}}^{-}(u,s),\mathcal{D}(r)\right]
        T_{\mathcal{H}}^{-}\left(\frac{u+s}{2},s\right)\right)\right. \nonumber \\
               &\left. +\partial_{u} \left(\left[\frac{\mathcal{H}}{2} +
        \mathcal{D}(r),T_{\mathcal{D}}^{-}(u,s)
        T_{\mathcal{H}}^{-}\left(u,\frac{u+s}{2}\right)\right]\right)\right\}
        T_{\mathcal{H}+\mathcal{D}}(r,s)du
        dr \nonumber \\
        &= -\int_{s}^{t}T_{\mathcal{H}}^{-}\left(\frac{r+s}{2},
            s\right)
            \int_{s}^{r}\left\{\left[\partial_{u}\left[T_{\mathcal{D}}^{-}(u,s)\right],\mathcal{D}(r)\right]
        T_{\mathcal{H}}^{-}\left(\frac{u+s}{2},s\right)\right. \nonumber \\
        &+ \left[T_{\mathcal{D}}^{-}(u,s),\mathcal{D}(r)\right]\partial_{u}\left[
        T_{\mathcal{H}}^{-}\left(\frac{u+s}{2},s\right)\right] \nonumber \\
        &\left. + \left[\frac{\mathcal{H}}{2} +
            \mathcal{D}(r), \partial_{u}\left[T_{\mathcal{D}}^{-}(u,s)\right]
        T_{\mathcal{H}}^{-}\left(u,\frac{u+s}{2}\right) + T_{\mathcal{D}}^{-}(u,s)\partial_{u}\left[
        T_{\mathcal{H}}^{-}\left(u,\frac{u+s}{2}\right)\right]\right] \right\}
        T_{\mathcal{H}+\mathcal{D}}(r,s) du dr \nonumber \\
        &= \int_{s}^{t}T_{\mathcal{H}}^{-}\left(\frac{r+s}{2},
            s\right)\int_{s}^{r}\left\{\left[T_{\mathcal{D}}^{-}(u,s)\mathcal{D}(u),\mathcal{D}(r)\right]
        T_{\mathcal{H}}^{-}\left(\frac{u+s}{2},s\right) +
        \left[T_{\mathcal{D}}^{-}(u,s),\mathcal{D}(r)\right]
        T_{\mathcal{H}}^{-}\left(\frac{u+s}{2},s\right)\frac{\mathcal{H}}{2}
        \right. \nonumber \\
        &\left. + \left[\frac{\mathcal{H}}{2} +
                \mathcal{D}(r), T_{\mathcal{D}}^{-}(u,s)\mathcal{D}(u)
        T_{\mathcal{H}}^{-}\left(u,\frac{u+s}{2}\right) + T_{\mathcal{D}}^{-}(u,s)
        T_{\mathcal{H}}^{-}\left(u,\frac{u+s}{2}\right)\frac{\mathcal{H}}{2}\right] \right\}
        T_{\mathcal{H}+\mathcal{D}}(r,s) du dr \nonumber \\
        &= \int_{s}^{t}\int_{s}^{r}T_{\mathcal{H}}^{-}\left(\frac{r+s}{2},
        s\right)\left\{T_{\mathcal{D}}^{-}(u,s)\left[\mathcal{D}(u),\mathcal{D}(r)\right]
        T_{\mathcal{H}}^{-}\left(\frac{u+s}{2},s\right) +
        \left[T_{\mathcal{D}}^{-}(u,s),\mathcal{D}(r)\right] \mathcal{D}(u)
        T_{\mathcal{H}}^{-}\left(\frac{u+s}{2},s\right) \right.\nonumber \\
        &+ \left[T_{\mathcal{D}}^{-}(u,s),\mathcal{D}(r)\right]\frac{\mathcal{H}}{2}
        T_{\mathcal{H}}^{-}\left(\frac{u+s}{2},s\right) \nonumber \\
        &\left. + \left[\frac{\mathcal{H}}{2} + \mathcal{D}(r),
        T_{\mathcal{D}}^{-}(u,s)\left(\frac{\mathcal{H}}{2} + \mathcal{D}(u)\right)
        T_{\mathcal{H}}^{-}\left(u,\frac{u+s}{2}\right)\right] \right\}
        T_{\mathcal{H}+\mathcal{D}}(r,s) du dr,
    \end{align} 
    where the last equality follows from commuting $T_{\mathcal{H}}^{-}$ and
    $\mathcal{H}$ since $\mathcal{H}$ is time-independent. Massaging the last
    commutator in the sum yields, 
    \begin{align}
        F(t,s) &= \int_{s}^{t}\int_{s}^{r}T_{\mathcal{H}}^{-}\left(\frac{r+s}{2},
        s\right)\left\{T_{\mathcal{D}}^{-}(u,s)\left[\mathcal{D}(u),\mathcal{D}(r)\right]
        T_{\mathcal{H}}^{-}\left(\frac{u+s}{2},s\right) +
        \left[T_{\mathcal{D}}^{-}(u,s),\mathcal{D}(r)\right] \mathcal{D}(u)
        T_{\mathcal{H}}^{-}\left(\frac{u+s}{2},s\right) \right.\nonumber \\
        &+ \left[T_{\mathcal{D}}^{-}(u,s),\mathcal{D}(r)\right]\frac{\mathcal{H}}{2}
        T_{\mathcal{H}}^{-}\left(\frac{u+s}{2},s\right) \nonumber \\
        &+ T_{\mathcal{D}}^{-}(u,s)\left(\frac{\mathcal{H}}{2} +
        \mathcal{D}(u)\right)\left[\mathcal{D}(r),
        T_{\mathcal{H}}^{-}\left(u,\frac{u+s}{2}\right)\right] \nonumber \\
        &+ T_{\mathcal{D}}^{-}(u,s)\left[\frac{\mathcal{H}}{2}, \mathcal{D}(u)
        \right]T_{\mathcal{H}}^{-}\left(u,\frac{u+s}{2}\right) +
        T_{\mathcal{D}}^{-}(u,s)\left[\mathcal{D}(r),
        \frac{\mathcal{H}}{2}\right]
        T_{\mathcal{H}}^{-}\left(u,\frac{u+s}{2}\right)\nonumber \\ 
        &+ T_{\mathcal{D}}^{-}(u,s)\left[\mathcal{D}(r),
        \mathcal{D}(u)\right]
        T_{\mathcal{H}}^{-}\left(u,\frac{u+s}{2}\right)\nonumber \\
        &+ \left[\frac{\mathcal{H}}{2},
        T_{\mathcal{D}}^{-}(u,s)\right] \left(\frac{\mathcal{H}}{2} +
        \mathcal{D}(u)\right) T_{\mathcal{H}}^{-}\left(u,\frac{u+s}{2}\right) \nonumber \\
        &\left. + \left[\mathcal{D}(r),
        T_{\mathcal{D}}^{-}(u,s)\right] \left(\frac{\mathcal{H}}{2} +
        \mathcal{D}(u)\right) T_{\mathcal{H}}^{-}\left(u,\frac{u+s}{2}\right)\right\}
        T_{\mathcal{H}+\mathcal{D}}(r,s) du dr.
    \end{align} 
    Simplifying the equation gives us, 
    \begin{align}
        F(t,s) &= \int_{s}^{t}\int_{s}^{r}T_{\mathcal{H}}^{-}\left(\frac{r+s}{2},
        s\right)\left\{- T_{\mathcal{D}}^{-}(u,s)\left(\frac{\mathcal{H}}{2} +
        \mathcal{D}(u)\right)\left[T_{\mathcal{H}}^{-}\left(u,\frac{u+s}{2}\right), 
        \mathcal{D}(r)\right] \right. \nonumber \\
        &- \left[T_{\mathcal{D}}^{-}(u,s),\frac{\mathcal{H}}{2}
        \right] \left(\frac{\mathcal{H}}{2} +
        \mathcal{D}(u)\right)
        T_{\mathcal{H}}^{-}\left(u,\frac{u+s}{2}\right)\nonumber \\
        &\left. + T_{\mathcal{D}}^{-}(u,s)\left[\frac{\mathcal{H}}{2}, \mathcal{D}(u)
        - \mathcal{D}(r)\right]
        T_{\mathcal{H}}^{-}\left(u,\frac{u+s}{2}\right)\right\}
        T_{\mathcal{H}+\mathcal{D}}(r,s) du dr\nonumber \\
        &= \int_{s}^{t}\int_{s}^{r}T_{\mathcal{H}}^{-}\left(\frac{r+s}{2},
        s\right)\Big\{\nonumber \\
        &- T_{\mathcal{D}}^{-}(u,s)\left(\frac{\mathcal{H}}{2} +
        \mathcal{D}(u)\right)
        \left(\int_{s}^{u}\partial_{v}\left[T_{\mathcal{H}}^{-}\left(v,\frac{v+s}{2}\right)
        \mathcal{D}(r) T_{\mathcal{H}}\left(v,\frac{v+s}{2}\right)\right]dv\right)
        T_{\mathcal{H}}^{-}\left(u,\frac{u+s}{2}\right) \nonumber \\
        &- \left(\int_{s}^{u}\partial_{v}\left[T_{\mathcal{D}}^{-}(v,s)\frac{\mathcal{H}}{2}
            T_{\mathcal{D}}(v,s)\right]dv\right)T_{\mathcal{D}}^{-}(u,s) \left(\frac{\mathcal{H}}{2} +
        \mathcal{D}(u)\right)
        T_{\mathcal{H}}^{-}\left(u,\frac{u+s}{2}\right)\nonumber \\
        &+ T_{\mathcal{D}}^{-}(u,s)\left[\frac{\mathcal{H}}{2},
            \int_{r}^{u}\partial_{v}\mathcal{D}(v)dv\right]
        T_{\mathcal{H}}^{-}\left(u,\frac{u+s}{2}\right)\nonumber \\
        &\Big\}T_{\mathcal{H}+\mathcal{D}}(r,s) du dr .
    \end{align}
    Applying the fundamental theorem of calculus again yields
    \begin{align}
        F(t,s) &= \int_{s}^{t}\int_{s}^{r}T_{\mathcal{H}}^{-}\left(\frac{r+s}{2},
        s\right)\Big\{\nonumber \\
        &- T_{\mathcal{D}}^{-}(u,s)\left(\frac{\mathcal{H}}{2} +
        \mathcal{D}(u)\right)
        \left(\int_{s}^{u}\partial_{v}\left[T_{\mathcal{H}}^{-}\left(v,\frac{v+s}{2}\right)
        \mathcal{D}(r) T_{\mathcal{H}}\left(v,\frac{v+s}{2}\right)\right]dv\right)
        T_{\mathcal{H}}^{-}\left(u,\frac{u+s}{2}\right) \nonumber \\
        &- \left(\int_{s}^{u}\partial_{v}\left[T_{\mathcal{D}}^{-}(v,s)\frac{\mathcal{H}}{2}
            T_{\mathcal{D}}(v,s)\right]dv\right)T_{\mathcal{D}}^{-}(u,s) \left(\frac{\mathcal{H}}{2} +
        \mathcal{D}(u)\right)
        T_{\mathcal{H}}^{-}\left(u,\frac{u+s}{2}\right)\nonumber \\
        &+ T_{\mathcal{D}}^{-}(u,s)\left[\frac{\mathcal{H}}{2},
            \int_{r}^{u}\partial_{v}\mathcal{D}(v)dv\right]
        T_{\mathcal{H}}^{-}\left(u,\frac{u+s}{2}\right)\nonumber \\
        &\Big\}T_{\mathcal{H}+\mathcal{D}}(r,s) du dr.
    \end{align}
    Continuing to simplify our expression for $F(t,s)$ gives us
    \begin{align}
        F(t,s) &= \int_{s}^{t}\int_{s}^{r}T_{\mathcal{H}}^{-}\left(\frac{r+s}{2},
        s\right)\Big\{\nonumber \\
        & \int_{s}^{u}T_{\mathcal{D}}^{-}(u,s)\left(\frac{\mathcal{H}}{2} +
        \mathcal{D}(u)\right)
        T_{\mathcal{H}}^{-}\left(v,\frac{v+s}{2}\right)\left[\frac{\mathcal{H}}{2},
        \mathcal{D}(r)\right] T_{\mathcal{H}}\left(v,\frac{v+s}{2}\right)
        T_{\mathcal{H}}^{-}\left(u,\frac{u+s}{2}\right)dv \nonumber \\
        &+ \int_{s}^{u}
        T_{\mathcal{D}}^{-}(v,s)\left[\mathcal{D}(v),\frac{\mathcal{H}}{2}\right]
        T_{\mathcal{D}}(v,s) T_{\mathcal{D}}^{-}(u,s) \left(\frac{\mathcal{H}}{2} +
        \mathcal{D}(u)\right)
        T_{\mathcal{H}}^{-}\left(u,\frac{u+s}{2}\right) dv\nonumber \\
        &+ \int_{r}^{u}T_{\mathcal{D}}^{-}(u,s)\left[\frac{\mathcal{H}}{2},
        \partial_{v}\mathcal{D}(v)\right]
        T_{\mathcal{H}}^{-}\left(u,\frac{u+s}{2}\right)dv\nonumber \\
        &\Big\}T_{\mathcal{H}+\mathcal{D}}(r,s) du dr \nonumber \\
        &= \int_{s}^{t}\int_{s}^{r}T_{\mathcal{H}}^{-}\left(\frac{r+s}{2},
        s\right)\Big\{\nonumber \\
        & \int_{s}^{u}T_{\mathcal{D}}^{-}(u,s)\left(\frac{\mathcal{H}}{2} +
        \mathcal{D}(u)\right)
        T_{\mathcal{H}}^{-}\left(v,\frac{v+s}{2}\right)\left[\frac{\mathcal{H}}{2},
        \mathcal{D}(r)\right] 
        T_{\mathcal{H}}^{-}\left(\frac{u+s}{2},\frac{v+s}{2}\right)dv \nonumber \\
        &+ \int_{s}^{u}
        T_{\mathcal{D}}^{-}(v,s)\left[\mathcal{D}(v),\frac{\mathcal{H}}{2}\right]
        T_{\mathcal{D}}^{-}(u,v) \left(\frac{\mathcal{H}}{2} +
        \mathcal{D}(u)\right)
        T_{\mathcal{H}}^{-}\left(u,\frac{u+s}{2}\right) dv\nonumber \\
        &+ \int_{r}^{u}T_{\mathcal{D}}^{-}(u,s)\left[\frac{\mathcal{H}}{2},
        \partial_{v}\mathcal{D}(v)\right]
        T_{\mathcal{H}}^{-}\left(u,\frac{u+s}{2}\right)dv\nonumber \\
        &\Big\}T_{\mathcal{H}+\mathcal{D}}(r,s) du dr.
    \end{align}
    Returning to equation~\eqref{eq:redefs} we find,
    \begin{align}
        \epsilon &\leq \int_{s}^{t}\int_{s}^{r}\dnorm{T_{\mathcal{H}}\left(t,\frac{t+s}{2}\right)
        T_{\mathcal{D}}\left(t,s\right)
        T_{\mathcal{H}}\left(\frac{t+s}{2},\frac{r+s}{2}\right)} \Big\{\nonumber \\
        & \int_{s}^{u} \dnorm{T_{\mathcal{D}}^{-}(u,s)\left(\frac{\mathcal{H}}{2} +
        \mathcal{D}(u)\right)
        T_{\mathcal{H}}^{-}\left(v,\frac{v+s}{2}\right)\left[\frac{\mathcal{H}}{2},
        \mathcal{D}(r)\right] 
        T_{\mathcal{H}}^{-}\left(\frac{u+s}{2},\frac{v+s}{2}\right)}dv \nonumber \\
        &+ \int_{s}^{u}
        \dnorm{T_{\mathcal{D}}^{-}(v,s)\left[\mathcal{D}(v),\frac{\mathcal{H}}{2}\right]
        T_{\mathcal{D}}^{-}(u,v) \left(\frac{\mathcal{H}}{2} +
        \mathcal{D}(u)\right)
        T_{\mathcal{H}}^{-}\left(u,\frac{u+s}{2}\right)} dv\nonumber \\
        &+ \int_{r}^{u}\dnorm{T_{\mathcal{D}}^{-}(u,s)\left[\frac{\mathcal{H}}{2},
        \partial_{v}\mathcal{D}(v)\right]
        T_{\mathcal{H}}^{-}\left(u,\frac{u+s}{2}\right)}dv\nonumber \\
        &\Big\}\dnorm{T_{\mathcal{H}+\mathcal{D}}(r,s)} du dr \nonumber \\
        &\leq \int_{s}^{t}\int_{s}^{r} \Big\{\nonumber \\
        & \int_{s}^{u} \dnorm{T_{\mathcal{D}}^{-}(u,s)}\dnorm{\left(\frac{\mathcal{H}}{2} +
        \mathcal{D}(u)\right)}
        \dnorm{T_{\mathcal{H}}^{-}\left(v,\frac{v+s}{2}\right)}\dnorm{\left[\frac{\mathcal{H}}{2},
        \mathcal{D}(r)\right]}
        \dnorm{T_{\mathcal{H}}^{-}\left(\frac{u+s}{2},\frac{v+s}{2}\right)}dv \nonumber \\
        &+ \int_{s}^{u}
        \dnorm{T_{\mathcal{D}}^{-}(v,s)}\dnorm{\left[\mathcal{D}(v),\frac{\mathcal{H}}{2}\right]}
        \dnorm{T_{\mathcal{D}}^{-}(u,v)} \dnorm{\left(\frac{\mathcal{H}}{2} +
        \mathcal{D}(u)\right)}
        \dnorm{T_{\mathcal{H}}^{-}\left(u,\frac{u+s}{2}\right)} dv\nonumber \\
        &+ \int_{r}^{u}\dnorm{T_{\mathcal{D}}^{-}(u,s)}\dnorm{\left[\frac{\mathcal{H}}{2},
        \partial_{v}\mathcal{D}(v)\right]}
        \dnorm{T_{\mathcal{H}}^{-}\left(u,\frac{u+s}{2}\right)}dv\nonumber \\
        &\Big\} du dr,
    \end{align}
    where the second inequality follows from $\dnorm{T_{\mathcal{L}}(t,s)} = 1$
    since $T_{\mathcal{L}}(t,s)$ is a CPTP map. Applying part 2 of
    Lemma~\ref{lm:bte} yields,
    \begin{align}
        \epsilon &\leq \int_{s}^{t}\int_{s}^{r} \Big\{\int_{s}^{u} e^{\frac{\dnorm{\mathcal{H}}}{2}(u - s) +
        \int_{s}^{u}\dnorm{\mathcal{D}(t')}dt'}\dnorm{\left(\frac{\mathcal{H}}{2} +
        \mathcal{D}(u)\right)}
        \dnorm{\left[\frac{\mathcal{H}}{2},
        \mathcal{D}(r)\right]} dv \nonumber \\
        &+ \int_{s}^{u}
        e^{\frac{\dnorm{\mathcal{H}}}{2} (u - s) +
        \int_{s}^{u}\dnorm{\mathcal{D}(t')}dt'}\dnorm{\left[\mathcal{D}(v),\frac{\mathcal{H}}{2}\right]}
        \dnorm{\left(\frac{\mathcal{H}}{2} +
        \mathcal{D}(u)\right)} dv\nonumber \\
        &+ \int_{r}^{u} e^{\frac{\dnorm{\mathcal{H}}}{2}(u-s) +
            \int_{s}^{u}\dnorm{\mathcal{D}(t')} dt' }\dnorm{\left[\frac{\mathcal{H}}{2},
        \partial_{v}\mathcal{D}(v)\right]} dv \Big\} du dr \nonumber \\
        &\leq \int_{s}^{t}\int_{s}^{r} e^{\left(\frac{\dnorm{\mathcal{H}}}{2} +
        \sup_{t}\dnorm{\mathcal{D}(t)}\right)(u -
        s)} \left\{\int_{s}^{u}
        \dnorm{\left(\frac{\mathcal{H}}{2} +
        \mathcal{D}(u)\right)}
        \dnorm{\left[\frac{\mathcal{H}}{2},
        \mathcal{D}(r)\right]} dv \right. \nonumber \\
        &+ \int_{s}^{u} \dnorm{\left[\mathcal{D}(v),\frac{\mathcal{H}}{2}\right]}
        \dnorm{\left(\frac{\mathcal{H}}{2} +
        \mathcal{D}(u)\right)} dv\nonumber \\
        &\left. + \int_{r}^{u} \dnorm{\left[\frac{\mathcal{H}}{2},
        \partial_{v}\mathcal{D}(v)\right]} dv \right\} du dr \nonumber \\
        &\leq \int_{s}^{t}\int_{s}^{r} e^{\left(\frac{\dnorm{\mathcal{H}}}{2} +
        \sup_{t}\dnorm{\mathcal{D}(t)}\right)(u -
        s)} \left\{\left(\frac{\dnorm{\mathcal{H}}}{2} +
        \sup_{t}\dnorm{\mathcal{D}(t)}\right)
        \sup_{t}\dnorm{\left[\mathcal{H},\mathcal{D}(t)\right]} (u -
        s)\right. \nonumber \\
        &\left. + \sup_{t}\dnorm{\left[\frac{\mathcal{H}}{2},
        \partial_{t}\mathcal{D}(t)\right]} \left((u - s) - (r - s)\right)
        \right\} du dr.
    \end{align}
    Defining $L = \frac{\dnorm{\mathcal{H}}}{2} +
    \sup_{t}\dnorm{\mathcal{D}(t)}$, $C = \sup_{t}\dnorm{\left[\mathcal{H},
    \mathcal{D}(t)\right]}$, and $D =
    \sup_{t}\dnorm{\left[\mathcal{H},
    \partial_{t}\mathcal{D}(t)\right]}$ gives us the bound,
    \begin{align}
        \epsilon &\leq \int_{s}^{t}\int_{s}^{r} e^{L(u -
        s)} \left\{\frac{2LC + D}{2} (u - s) - \frac{D}{2} (r - s)
        \right\} du dr\nonumber \\
        &\leq \int_{s}^{t} \left\{\frac{2LC + D}{2L^{2}}
        \left[e^{L(r-s)}(L(r-s) - 1) + 1\right] - \frac{D}{2L^{2}} L(r -
        s)\left[e^{L(r-s)} - 1\right]
        \right\} dr \\
        &\leq \left\{\frac{2LC + D}{2L^{3}}
        \left(2\left(1 - e^{L(t-s)}\right) + L(t-s)\left(1 + e^{L(t-s)}\right)
        \right) - \frac{D}{2L^{3}}
        \left((L(t-s) - 1)e^{L(t-s)} + 1 - \frac{\left(L(t -
        s)\right)^{2}}{2}\right) \nonumber 
        \right\}.
    \end{align}
    Rewriting the bound we find,
    \begin{align}
        \epsilon &\leq \left\{\frac{2LC + D}{2L^{3}}
        \left(2\left(1 - e^{L(t-s)}\right) 
        + L (t-s)\left(1 + e^{L(t-s)}\right)
        \right) - \frac{D}{2L^{3}}
        \left((L(t-s) - 1)e^{L(t-s)} + 1 - \frac{\left(L(t-s)
        \right)^{2}}{2}\right)
        \right\} \nonumber \\
        \epsilon &\leq \left\{\frac{2LC + D}{2L^{3}}
        \left(\frac{(L(t-s))^{3}}{3} + \mathcal{O}\left((L(t-s)
        )^{4}\right)\right) - \frac{D}{2L^{3}}
        \left(\frac{(L(t-s))^{3}}{3} + \mathcal{O}\left((L(t-s))^{4}\right)\right)
        \right\}
    \end{align}
    Assuming $\left(\frac{\dnorm{\mathcal{H}}}{2} +
    \sup_{t}\dnorm{\mathcal{D}(t)}\right)(t-s) = L(t-s) \leq 1$, gives us 
    \begin{equation}
        \epsilon \leq \frac{LC}{3}(t-s)^{3} =
        \frac{1}{3}\sup_{t}\dnorm{\left[ \mathcal{H},
        \mathcal{D}(t)\right]}\left(\frac{\dnorm{\mathcal{H}}}{2} +
        \sup_{t}\dnorm{\mathcal{D}(t)}\right)(t-s)^{3}
    \end{equation}
\end{proof}

\section{Evolution generated by unitary jump operators}
Let $\mathcal{D}$ be a Lindbladian dissipator with jump operators $L_{\mu}$.
Setting $L_{\mu} = \alpha_{\mu} U_{\mu}$ where $U_{\mu}$ is unitary gives us 
$\mathcal{D}(\rho) = \sum_{\mu=1}^{m}\left(L_{\mu}\rho
L_{\mu}^{\dagger} - \frac{1}{2}\{L_{\mu}^{\dagger}L_{\mu},\rho\}\right) =
\sum_{\mu=1}^{m} |\alpha_{\mu}|^{2} U_{\mu} \rho U_{\mu}^\dagger-\sum_{\mu=1}^{m}
|\alpha_{\mu}|^{2} \rho$. Define the random unitary channel $\mathcal{R}
(\rho)=\sum_{\mu=1}^{m}
|\alpha_{\mu}|^{2} U_{\mu} \rho U_{\mu}^\dagger$ and
$a=\sum_{\mu=1}^{m}|\alpha_{\mu}|^2$, hence $\mathcal{D}(\rho)= \mathcal{R}-a \mathcal{I}$. Note that  $e^{\delta t \mathcal{D}}=e^{-a \delta t \mathcal{I}} e^{\delta t  \mathcal{R}}$. Assuming $ \delta t\dnorm{\mathcal{R}} \leq 1$, we have
\begin{equation}
    \dnorm{e^{\delta t \mathcal{D}}-e^{- a\delta t }\sum_{k=0}^K{  \frac{\delta t^k}{k!} \mathcal{R}^k (\rho)}} \leq c e^{-a \delta t } \frac{(\delta t\dnorm{\mathcal{R}})^{K+1}}{(K+1)!}.
\end{equation}
Clearly $\sum_{k=0}^K{ e^{-a \delta t} \frac{\delta t^k}{k!} \mathcal{R}^k
(\rho)}$ is a random unitary channel with non-negative coefficients for any
$K$. Therefore the evolution can be approximated with a random unitary channel up to arbitrary accuracy by increasing $K$.
Note that the sum can involve terms that are up to $K$ multiplication of
unitary operators $\{U_{\mu}\}$.  When $\{U_{\mu}\}$ belongs to a group of
efficiently implementable unitaries, such as $n$-fold tensor products of single-qubit unitaries or Clifford circuits which are closed under multiplication, the complexity of implementing the resulting unitary operators does not grow with $K$.

\section{Costs for Simulating $\mathcal{L}_{\gamma}$ and $\mathcal{L}_{\Gamma}$}

Consider the case of a Lindbladian $\mathcal{L}_{\gamma}$ which generates
arbitrary Hamiltonian evolution along with global depolarizing noise,
\begin{align}
    \mathcal{L}_{\gamma}(\rho) &= \mathcal{H}(\rho) + \mathcal{D}_{\gamma}(\rho)
    = -i [H, \rho] + \sum_{\mu=1}^{m}\left(L_{\mu}\rho L_{\mu}^{\dagger} -
        \frac{1}{2}\{L_{\mu}^{\dagger}L_{\mu}, \rho\}\right) \nonumber \\
    &= -i [H, \rho] + \sum_{\mu=1}^{4^{n} -
        1}\frac{\gamma}{4^{n}}\left(V_{\mu}\rho V_{\mu}^{\dagger} -
        \frac{\gamma}{4^{n}}\rho\right)
\end{align}
where $H$ is an arbitrary Hamiltonian, $\gamma \in \mathbb{R}_{+}$, and
$V_{\mu} \in \{\mathbb{I}, X, Y, Z\}^{\otimes n} / \{\mathbb{I}^{\otimes n}\}$.
There are $m = 4^{n} - 1$ jump
operators for $\mathcal{D}_{\gamma}$ of the form
$L_{\mu} = \sqrt{\frac{\gamma}{4^{n}}}V_{\mu}$. Since $V_{\mu}$ is unitary,
$e^{\delta t \mathcal{D}}$ is  approximately stochastically simulatable
up to arbitrary precision as discussed in the above section. Thus oracles
$\mathcal{A}_{\delta t}^{(i)}$ are compositions of elements in $\{\mathbb{I}, X
, Y, Z\}^{\otimes n}/\{\mathbb{I}^{\otimes n}\}$. Calculating
$\pnorm{\mathcal{L}}$ gives
\begin{equation}
    \pnorm{\mathcal{L}} = \pnorm{\mathcal{H}} + \pnorm{\mathcal{D}_{\gamma}}
    = \pnorm{\mathcal{H}} + \sum_{\mu = 1}^{m}\left(\sum_{k=0}^{s-1}\beta_{\mu
    k}\right)^{2}.
\end{equation}
Since jump operators $L_{\mu} = \sqrt{\frac{\gamma}{4^{n}}}V_{\mu}$ are already
in the form of $L_{\mu} = \sum_{k=0}^{s-1}\beta_{\mu k }V_{\mu k}$ where
$\beta_{\mu k} > 0$ and $V_{\mu k}$ is an $n$-fold tensor product of Pauli
operators with arbitrary phase, we find 
\begin{equation}
    \sum_{k=0}^{s-1}\beta_{\mu k} = \sqrt{\frac{\gamma}{4^{n}}}.
\end{equation}
This implies 
\begin{equation}
    \pnorm{\mathcal{L}}
    = \pnorm{\mathcal{H}} + \sum_{\mu = 1}^{4^{n} - 1}\frac{\gamma}{4^{n}} =
    \pnorm{\mathcal{H}} + \left(1 - \frac{1}{4^{n}}\right)\gamma =
    O\left(\pnorm{\mathcal{H}} + \gamma \right).
\end{equation}

Since oracles $\mathcal{A}_{\delta t}^{(i)}$ are compositions of elements in
$\{\mathbb{I}, X, Y, Z\}^{\otimes n} / \{\mathbb{I}^{\otimes n}\}$,
$\mathcal{A}_{\delta t}^{(i)}$ are $n$-fold tensor products of single-qubit
Pauli operators with arbitrary phase. Thus each
$\mathcal{A}_{\delta t}^{(i)}$ requires $O(n)$ single-qubit gates to implement. Applying
Theorem~\ref{thm:costs} gives us the total cost to simulate the dynamics
described by $\mathcal{L}_{\gamma}$ to be 
\begin{equation}
O\left(n  \frac{((\gamma +
\pnorm{\mathcal{H}})T)^{1.5}}{\sqrt{\epsilon}}\right)
\end{equation}
1- and 2-qubit gates along with 
\begin{equation}
    \Tilde{O}\left(\log(T / \epsilon)\right)
\end{equation}
ancilla if choosing to use the Hamiltonian simulation subroutine of
\cite{berry2015simulating}. 

As another example, consider the case when $\mathcal{L}_{\Gamma}$
generates 1-local dephasing noise. In this case,
the jump
operators have the form $L_{\mu} = \sqrt{\frac{\Gamma}{2}}Z_{\mu}$ where
$Z_{\mu}$ acts on the $\mu$-th qubit. The analysis follows similar to above
with  $\gamma$ replaced by $n\Gamma / 2$. Similarly, oracles
$\mathcal{A}_{\delta t}^{(i)}$ become  $n$-fold tensor products of single qubit
unitaries.

\end{document}